\pgfplotsset{width=9\columnwidth /10, compat = 1.13, 
	height = 45\columnwidth /100, grid= major, 
	legend cell align = left, ticklabel style = {font=\scriptsize},
	every axis label/.append style={font=\small},
	legend style = {font=\scriptsize},title style={yshift=0pt, font = \small} }
\tikzset{cross/.style={cross out, draw=black, minimum size=10*(#1-\pgflinewidth), inner sep=0pt, outer sep=0pt},cross/.default={1pt}}
\tikzset{
	myarrow/.style={-{Triangle[length=2mm,width=2mm]}}
}
	\newtheorem{assumption}{Assumption}
	\newtheorem{lemma}{Lemma}
	\newtheorem{theorem}{Theorem}
	\newtheorem{remark}{Remark}
\Crefname{assumption}{Assumption}{Assumptions}
\Crefname{lemma}{Lemma}{Lemmata}
\Crefname{theorem}{Theorem}{Theorems}
\Crefname{Definition}{Definition}{Definitions}
\Crefname{Proposition}{Proposition}{Propositions}
\Crefname{Corollary}{Corollary}{Corollaries}
\Crefname{Algorithm}{Algorithm}{Algorithms}
\crefname{section}{Sec.}{Sections}
\Crefname{section}{Section}{Sections}
\Crefname{algorithm}{Algorithm}{Algorithms}
\Crefname{Remark}{Remark}{Remarks}
\DeclareRobustCommand{\qed}{%
	\ifmmode 
	\else \leavevmode\unskip\penalty9999 \hbox{}\nobreak\hfill
	\fi
	\quad\hbox{\qedsymbol}}
\newcommand{\openbox}{\leavevmode
	\hbox to.77778em{%
		\hfil\vrule
		\vbox to.675em{\hrule width.6em\vfil\hrule}%
		\vrule\hfil}}
\newcommand{\qedsymbol}{\openbox}
\newenvironment{proof}[1][\proofname]{\par
	\normalfont
	\topsep6\p@\@plus6\p@ \trivlist
	\item[\hskip\labelsep\itshape
	#1:]\ignorespaces
}{%
	\qed\endtrivlist
}
\newcommand{\proofname}{Proof}
\newcommand\xaug{{\tilde{\bm x}}}
\newcommand{\x}{{\bm x}}
\newcommand{\Ndata}{N}
\newcommand{\DSamp}{\mathcal{S}}
\newcommand{\D}{\mathcal{D}} 
\newcommand{\StatSpAug}{{\tilde{\StatSp}}}
\newcommand{\y}{\bm y}
\newcommand{\gsc}{g}
\newcommand{\kernel}{k}
\newcommand{\NSubsumption}{\tilde{N}}
\newcommand{\X}{{\mathbb{X}}} 
\newcommand{\Xaug}{\tilde{\mathcal{X}}}
\newcommand{\XDaug}{\tilde{\mathcal{X}}_{\Ndata}}
\newcommand{\Uin}{\mathbb{U}} 
\newcommand{\K}{{\bm{K}}}
\newcommand{\f}{{\bm f}}
\newcommand{\fsc}{f}
\newcommand{\g}{{\bm{g}}}
\newcommand{\hsc}{h}
\newcommand{\h}{{\bm{h}}}
\newcommand{\w}{{\bm{w}}} 
\newcommand{\wsc}{{w}} 
\newcommand{\transp}{^{\text{T}}}
\newcommand{\Id}{\bm I}
\newcommand{\Nhor}{H}
\newcommand{\Ncons}{S} 
\newcommand{\dimx}{{d_x}} 
\newcommand{\dimu}{{d_u}}
\newcommand{\dimxaug}{d_{\tilde{x}}}
\newcommand{\uin}{{\bm u}}
\newcommand{\uinsc}{u}
\newcommand{\xsamp}{\bm{s}}
\newcommand{\xaugsamp}{\tilde{\bm{s}}}
\newcommand{\xaugsc}{\tilde{x}}
\newcommand{\StatSp}{\mathbb{X}} 
\newcommand{\postmean}{\bm{\mu}}
\newcommand{\postmeansc}{\mu}
\newcommand{\step}{t}
\newcommand{\Nsamps}{{M}} 
\newcommand{\C}{C}
\newcommand{\DSubsumption}{\mathcal{Z}}
\newcommand{\Nlaws}{L}
\DeclarePairedDelimiterX\PBasics[1](){ #1}
\DeclarePairedDelimiterX\EBasics[1][]{ #1}
\title{\LARGE \bf
Data selection for multi-task learning under dynamic constraints
}
\author{Alexandre Capone, Armin Lederer, Jonas Umlauft and Sandra Hirche
\thanks{All authors are with the Department of Electrical and Computer Engineering, Technical University of Munich, 80333 Munich, Germany
        {\tt\small [alexandre.capone, armin.lederer, jonas.umlauft, hirche]@tum.de}}%
}
\begin{document}

\maketitle
\thispagestyle{empty}
\pagestyle{empty}

\begin{abstract}

Learning-based techniques are increasingly effective at controlling complex 
systems using data-driven models. However, most work done so far has focused on 
learning individual tasks or control laws. Hence, it is still a largely 
unaddressed research question how multiple tasks can be learned efficiently and simultaneously on the same system. In 
particular, no efficient state space exploration schemes have been designed for 
multi-task control settings. Using this research gap as our main motivation, we 
present an algorithm that approximates the smallest data set that 
needs to be collected in order to achieve high control performance for multiple 
learning-based control laws. We describe system uncertainty using a 
probabilistic Gaussian process model, which allows us to quantify the impact of 
potentially collected data on each learning-based controller. We then determine 
the optimal measurement locations by solving a stochastic optimization problem 
approximately. We show that, under reasonable assumptions, the 
approximate solution converges towards that of the exact problem. Additionally, 
we provide a numerical illustration of the proposed algorithm.

\end{abstract}


\section{Introduction}

The success of data-driven techniques in control crucially depends on the quality of the available training data 
set~\cite{umlauft2020smart,beckers2017stable,deisenroth2015gaussian}. In reinforcement learning, this difficulty is tackled 
through task-oriented exploration, i.e., by collecting data that is particularly 
useful for the given task~\cite{deisenroth2015gaussian}. However, if the 
task changes, e.g., the system is required to follow a different reference 
trajectory, then the available data might be unsuited to train the corresponding control law, and a new 
exploration phase is necessary. This type of scenario is addressed by multi-task 
reinforcement learning approaches, where policies are sequentially trained for different tasks in order to achieve good overall performance~\cite{wilson2007multi}. However, multi-task reinforcement learning approaches often do not consider constraint requirements \cite{6907421,hessel2019multi,teh2017distral}. Furthermore, if all task-related exploration requirements are amalgamated into a single exploration phase, then the number of system interactions required to obtain good control performance across all tasks is potentially reduced. This is generally desirable, 
as system interactions are often considered costly~\cite{deisenroth2011learning}.

Most techniques for system exploration aim to steer the state to regions that correspond to high system uncertainty \cite{NIPS2018_7538,koller2018learning}, i.e., they aim to achieve a globally accurate model. However, this is intractable for large state spaces, as it implies prohibitively long exploration periods. Moreover, some regions of the state 
space do not 
need to be explored in order to obtain good control performance. Hence, these approaches are not suited to efficiently collect data for multi-task reinforcement learning.

Efficiently exploring the state space of a system to gather data for multiple different tasks poses a twofold challenge. Firstly, the optimal set of hypothetical system measurements needs to be determined. Secondly, an efficient exploration trajectory needs to be determined. In this work, we address this dilemma by proposing an algorithm that approximates the \textit{minimal} number of hypothetical measurement points required to achieve good control performance in several different tasks. This is the main contribution of our paper.
We employ a probabilistic Gaussian process model to quantify model uncertainty, and measure control performance by computing the probability of constraint violation given dynamic constraints. Our algorithm employs a 
random sampling-based approximation, which we show to be exact as the number of samples tend to infinity.

This paper is structured as follows: After a formal problem definition in 
\cref{sec:problem}, the considered Bayesian model is introduced, in 
\cref{sec:GPs}. \Cref{sec:opt} presents the 
algorithm for approximating the optimal measurement locations, which is the main contribution of our paper. A numerical illustration, 
in \cref{sec:simulation}, is followed by a conclusion, in \cref{sec:conclusion}.

\section{Problem Statement}
\label{sec:problem}
We consider a stochastic nonlinear system of the form \footnote{Let $\mathbb{N}$ denote the positive integers, $\mathbb{N}_0\coloneqq \mathbb{N}\cup \{0\}$ the non-negative integers, $\mathbb{R}$ the real numbers, and $\mathbb{R}_{-}$ the negative real numbers,. $\mathcal{P}(\cdot)$ denotes the power set operator. We employ bold notation to denote vectors and matrices and $\leq$ to denote component-wise inequality. Given matrices $\bm{A}\in \mathbb{R}^{m\times n}$, $\bm{B}\in \mathbb{R}^{m\times l}$, $m,n, l\in \mathbb{N}$, we employ brackets accompanied by subscripts $[\bm{A}]_{ij}$ to denote the entry in the $i$-th row and $j$-th column of $\bm{A}$, and brackets without subscripts $[\bm{A} \bm{B}]$ to denote the matrix concatenation of $\bm{A}$ and $\bm{B}$. $\lceil \cdot \rceil$ denotes the ceiling operator, and $\Id_{n}$, $n\in \mathbb{N}$ denotes the $n$-dimensional identity matrix.}
\begin{align}
	\label{eq:SystemDynamics}	
	\begin{split}
		{\x_{\step+1}}  &= \f(\x_{\step},\uin_{\step}) + \g(\x_{\step},\uin_{\step}) + \w_{\step} \\
		& \coloneqq \f(\xaug_{\step}) + \g(\xaug_{\step})+ \w_{\step}
\end{split}	\end{align}
where ${\x_{\step} \in \StatSp \subseteq \mathbb{R}^{\dimx}}$, ${\mathbf{u}_{\step} \in \Uin \subseteq \mathbb{R}^{\dimu}}$ are the system's states, control inputs at time step $\step \in \mathbb{N}$, respectively. The system is perturbed by normally distributed process noise $\w_{\step} \sim \mathcal{N}(\bm{0},\bm{I}_{\dimx})$. The vector ${\xaug_{\step}:= (\x_{\step}, \uin_{\step}) \in \StatSpAug}$, where ${\StatSpAug:=\StatSp \times \Uin}$, concatenates the state $\x_{\step}$ and the control inputs $\uin_{\step} $, and is introduced for the sake of brevity. The function $\f :\StatSpAug \mapsto \StatSp$ is known a priori, whereas $\g : \StatSpAug \mapsto \StatSp$, is an unknown function, for which we assume to have a probabilistic model, as discussed in \Cref{sec:GPs}.
\begin{remark}
Assuming that $\f(\cdot)$ is known does not constitute a restrictive requirement, as it encompasses the scenario without any precise prior system knowledge, i.e., $\f(\xaug_{\step})= \x_{\step}$.
\end{remark}We assume that we are given $\Nlaws \in \mathbb{N}$ data-driven control laws $\uin^j: \X \times \mathcal{P}( \StatSpAug \times \StatSp) \times \mathbb{N} \mapsto \X$. The second argument of the control laws corresponds to a set of $\Ndata$ measurement data pairs 
\begin{align}
\D_{\Ndata} \coloneqq \left\{ \xaug^{(i)}, \f(\xaug^{(i)}) + \g(\xaug^{(i)}) + \w^{(i)}\right\}_{i \in \mathbb{N}_{\leq \Ndata}},
\end{align}
which is to be collected, e.g., via system exploration. The third argument of $\uin^j(\cdot,\cdot,\cdot)$ is the time step $\step$, which accounts for any time-dependent component of the control laws, e.g., time-varying reference trajectories. This type of control law is frequently employed in learning-based settings \cite{Capone2019BacksteppingFP,koller2018learning}. For the sake of notational simplicity, we henceforth use $\XDaug \coloneqq \left\{ \xaug^{(i)}\right\}_{i \in \mathbb{N}_{\leq \Ndata}} \in \StatSpAug^{\Ndata}$ to denote the locations of $\Ndata$ system measurements. Furthermore, make the following assumption.
%

\begin{assumption}
	\label{assumption:uisrealanalytic}
	The control laws $\uin^j(\cdot,\cdot,\cdot)$ are real analytic with respect to the first argument..
\end{assumption} 

In particular, this implies that the control laws $\uin^j(\cdot,\cdot,\cdot)$ are smooth with respect to the state. This applies for many commonly used control laws, e.g., PID-controllers and neural networks with smooth activation functions.

Each control law $\uin^j(\cdot,\cdot, \cdot)$ is required to fulfill a different task, which is expressed as a series of constraints
\begin{align}
	\label{eq:Constraints}
	\h^{j}_{\step}\left(\xaug^j_{\step}\right) \leq \bm{0}, \quad \forall \ \step\in \mathbb{N}_{\leq \Nhor}, \ j\in \mathbb{N}_{\leq \Nlaws}
\end{align}
over a finite time horizon of $\Nhor$ steps. Here $\h^j_{\step}: \StatSpAug \mapsto \mathbb{R}^{\Ncons}$ are nonlinear constraint functions, $\Ncons \in \mathbb{N}$ denotes the number of constraints corresponding to the $j$-th control law, $\xaug^j_{\step}\coloneqq (\x_\step,\uin^j(\x_{\step}, \D))$. Such constraints are often linear, e.g., in the case of energy or input saturation constraints, or polynomial, e.g., in the case of tracking error performance requirements. In this work, we consider the following, more general case:
\begin{assumption}
	\label{ass:hisrealanalytic}
	The entries $[\h^{j}_{\step}(\cdot)]_i$ of the functions $\h^{j}_{\step}(\cdot)$ are non-constant and real analytic \cite{krantz2002primer}.
\end{assumption}

\begin{remark}
The proposed method extends straightforwardly to the more general case where both the horizon $\Nhor$ and number of constraints $\Ncons$ are different for each control law. However, we do not consider this case, as it would incur cumbersome notation.
\end{remark}

We aim to obtain the smallest possible set of measurement locations $\Xaug^* := \left\{\xaug^{(i),*}\right\}_{i=1,\cdots,\Ndata^*}$, such that the corresponding data set $\D^*$, if collected and used to design the control laws $\uin^j(\cdot,\cdot, \cdot)$, yields system trajectories that satisfy \eqref{eq:Constraints} with high probability, i.e.,
\begin{flalign}
	\label{eq:MinimizationProblem}
	\thickmuskip = 3mu
	\begin{alignedat}{4}
		\thickmuskip = 0mu
	\medmuskip = 0mu
	\thinmuskip = 0mu
	\Xaug^* = &\omit{\rlap{$ \  \arg \min \limits_{\XDaug \in \mathcal{P}(\StatSpAug)} \Ndata $}}&&& \\
	\text{s.t.} \quad &  \D_{\Ndata} &&= \left\{ \xaug^{(i)}, \f(\xaug^{(i)}) + \g(\xaug^{(i)}) + \w^{(i)}\right\}_{i \in \mathbb{N}_{\leq \Ndata}} &\\
	&\omit{\rlap{$\text{P}\Bigg( \h^j_{\step}(\xaug^j_{\step}) \leq \bm{0}, \quad  \forall \ \step\in \mathbb{N}_{\leq \Nhor}, \ j\in \mathbb{N}_{\leq \Nlaws} \Bigg) > 1- \delta,$}}& 
	\end{alignedat}&&
\end{flalign}
where  $0 < \delta<1$ is a predetermined scalar that specifies the desired probability of constraint violation. The probability operator $\text{P}(\cdot)$ describes the probability of an event given process noise $\w_{\step}$ and the a priori distribution that we assume for the unknown function $\g(\cdot)$, as discussed in \Cref{sec:GPs}.

\begin{remark}
	Since the system dynamics are unknown , the measurements in an arbitrary data set $ \D_{\Ndata}$ are  hypothetical. However, by assuming a priori distribution over $\g(\cdot)$, we are able to determine the impact of measurement locations $\XDaug$ on control performance. 
\end{remark}

\begin{remark}
	In order to guarantee convergence of the method proposed in this paper, we require a solution $\Xaug^*$ of \eqref{eq:MinimizationProblem} to satisfy the chance constraints strictly. However, this is not a severe restriction, as $\delta$ is a design choice.
\end{remark}

Finding an optimal set $\Xaug^*$ under uncertainty is generally impossible without considering further assumptions. Hence, we restrict ourselves to the case where the controllers are specified in a way that the desired closed-loop behavior is achievable:
\begin{assumption}
	\label{ass:Problemhassolution}
	The optimization problem \eqref{eq:MinimizationProblem} is feasible for a finite $\Xaug^*$, i.e., $\rvert \Xaug^* \rvert \eqqcolon \Ndata^* < \infty$.
\end{assumption}

Furthermore, we assume that the optimal data set is contained within a known compact subset of $\StatSp$:
\begin{assumption}
	\label{ass:OptimuminCompactSet}
	There exists a known compact subset $\StatSpAug^{*} \subset \StatSpAug$, such that $\xaug^{(i),*} \in \StatSpAug^{*}$ for all $i\in \left\{1,\cdots,\Ndata^*\right\}$.
\end{assumption}
This does not constitute a very restrictive assumption, since control tasks are typically restricted to a compact subset of the state space, which in turn implies that only system information within a compact subset is required to achieve good control performance.

In order to streamline notation, we henceforth subsume measurement data and system trajectories of \eqref{eq:SystemDynamics} as
\begin{align}
	\DSubsumption_{\NSubsumption} \coloneqq \left\{ \tilde{\bm{z}}_n, \f(\tilde{\bm{z}}_n) + \g(\tilde{\bm{z}}_n) + \w_n \right\}_{n\in \mathbb{N}_{\leq \NSubsumption}}
\end{align}
where $\NSubsumption \coloneqq \Ndata+\Nlaws(\Nhor+1)$,
\begin{align}
	\tilde{\bm{z}}_n = \begin{cases}
\xaug^{(d_n)}, & n =0\cdots,\Ndata-1 \\
\xaug^{j_n}_{\step_n} & n = \Ndata,\cdots, \NSubsumption 
\end{cases}
\end{align}
$d_n\coloneqq n+1 $ $j_n \coloneqq \lceil (n-\Ndata)/(\Nhor+1)\rceil$, $\step_n\coloneqq n - \Ndata -(j(n)-1)(\Nhor+1)$. 
\section{Probabilistic model}
\label{sec:GPs}
In order to quantify the uncertainty corresponding to the unknown component $\g(\cdot)$, we assume a GP distribution over $\g(\cdot)$. Formally, a GP is a collection of random variables, of which any finite subset is jointly normally distributed \cite{Rasmussen2006}. In order to assess how data collected in the future will potentially affect control performance, we need to quantify how model uncertainty decreases as new data points are added. To this end, we consider \textit{hypothetical} data sets $\DSamp_{\NSubsumption} = \left\{ \xaug_n, \f(\xaug_n) + \g^s(\xaug_n) + \w_n \right\}_{n \in \mathbb{N}_{\leq \NSubsumption}}$, 
which are sampled from the GP distribution. Here we employ the superscript $s$ to emphasize that $\g^s(\cdot)$ is a \textit{sample} function evaluation, as opposed to an evaluation of the true function $\g(\cdot)$. This is explained in detail in the sequel.

\begin{remark}
A GP model can be trained using measurement data from the true system \eqref{eq:SystemDynamics}. For the sake of notational simplicity, we analyze the setting where no prior measurement data from the \textit{true system} is available, and show exclusively how to draw samples from a GP in a recursive fashion. However, this does not constitute a loss of generality, a posterior GP distribution after training satisfies the requirements used in this paper \cite{Rasmussen2006}.
\end{remark}

We begin by introducing GPs for the case where $\dimx = 1$, and then describe how they can be generalized to a multivariate setting. A GP is fully specified by a mean function, which we set to zero without loss of generality \cite{Rasmussen2006}, and a positive definite kernel $k :\mathbb{R}^{\dimxaug} \times \mathbb{R}^{\dimxaug}\mapsto \mathbb{R}$. Given a sample data set $\DSamp_{n}$, a subsequent sample evaluation at an arbitrary augmented state $\xaug$ is normally distributed, i.e., $\gsc^s(\xaug) \sim \mathcal{N}(\mu_{n+1}(\xaug), \sigma^2_{n+1}(\xaug))$, with mean and variance given by
\begin{align}
\label{eq:GPMean}
\begin{split}
\postmeansc_{n+1} \left(\xaug\right) \coloneqq &\postmeansc \left(\xaug \vert \DSamp_{n} \right) 
= \bm{\kernel}_n\transp\left(\xaug\right) \K_n^{-1} \y_n
\end{split}\\
\medmuskip=0mu
\thinmuskip=0mu
\label{eq:GPVar}
\begin{split}	
\sigma_{n+1}^2\left(\xaug \right) \coloneqq & \sigma^2\left(\xaug \vert \DSamp_{n} \right) =\kernel\left(\xaug,\xaug\right) - \bm{\kernel}_n\transp\left(\xaug\right)\K_n^{-1} \bm{\kernel}_n\left(\xaug\right),
\end{split} 
\end{align}
where $\bm{\kernel}_n(\xaug) = (\kernel(\xaug,\xaugsamp_1), \cdots, \kernel(\xaug,\xaugsamp_n))$, $\y_n = (\gsc^s(\xaugsamp_1),\cdots, \gsc^s(\xaugsamp_n))$, and the entries of the covariance matrix are given by $
[\K_{n}]_{ij} =\kernel(\xaugsamp_i,\xaugsamp_j)$.

Using \eqref{eq:GPMean} and \eqref{eq:GPVar}, we are able to sample measurement data sets as well as system trajectories from the prior GP distribution using
\begin{align}
	\gsc^s(\xaug) \coloneqq \postmeansc_{n+1} \left(\xaug\right) + \sigma_{n+1} \left(\xaug\right)\zeta
\end{align}
and $\zeta \sim \mathcal{N}(0,1)$
In settings where $\dimx > 1$, we model each dimension using a separate GP, i.e., $\g^s(\xaug) \sim \mathcal{N} \left(\postmean_{n}(\xaug) ,\bm{\sigma}^2_{n}(\xaug) \right)$, where
$
\postmean_{n}(\xaug)\coloneqq  \left(\postmeansc(\xaug\vert \DSamp_{1,n}), \ \cdots, \ \postmeansc(\xaug\vert  \DSamp_{\dimx,n}) \right) $, $
\bm{\sigma}^2_{n}(\xaug)
\coloneqq  \text{diag}\Big({\sigma}^2(\xaug\vert \DSamp_{1,n}), \ \cdots, {\sigma}^2(\xaug\vert \DSamp_{\dimx,n})\Big)$, 
and the measurement data and samples are separated for each dimension $d \in\left\{1,\cdots, \dimx\right\}$ as $\DSamp_{d,n} = \left\{ \xaug^{(i)}, \fsc_d(\xaug^{(i)}) + [\g^s(\xaug^{(i)})]_d +  \wsc_d\right\}_{i=1, \cdots, n }$. This approach corresponds to conditionally independent state transition function entries, which is a common assumption for multivariate systems \cite{deisenroth2015gaussian}. 

In the following, we formally state the GP-related assumption required to carry out our method.
\begin{assumption}
	\label{ass:EntriesfromgareGPsamples}
	The entries of $\g(\cdot)$ correspond to samples from a GP with zero mean and known analytic kernel $\kernel(\cdot,\cdot)$, i.e., $[\g]_d(\cdot) \sim \mathcal{GP}(0,\kernel)$ holds for $d=1, \cdots, \dimx$.
\end{assumption}

In particular, Assumption \ref{ass:EntriesfromgareGPsamples} implies that the expected value of an arbitrary state $\xaug_{\step}^j$ at time $\step$ under control law $j$ is given by
\begin{align}
\label{eq:RewrittenEstDynamics}
\begin{split}
\text{E}_{\g,\w}\left(\x^j_{\step}\right) =& \int \limits_{\StatSp^{2t}} \xsamp_{n_{j,\step}}\prod \limits_{i=0}^{n_{j,\step}}  \text{p}(\bm{\zeta}_i)  d\bm{\zeta}_{i},
\end{split}
\end{align}
where $n_{j,\step}\coloneqq \Ndata + (j-1)(\Nhor+1) + \step $, $\text{E}_{\g,\w}(\cdot) $ denotes the expected value with respect to the unknown function $\g(\cdot)$ and the process noise, and the samples are computed recursively using
\begin{align*}
\xsamp_{n+1} =& \f(\xaugsamp_n) +\bm{\mu}_{n}(\xaugsamp_n) + \left[\bm{\sigma}_{n}(\xaugsamp_n)\quad  \bm{Q}\right]\bm{\zeta}_n, \\
\xsamp_{n} =& \left( \xaugsamp_n, \uin^j(\xaugsamp_n,\DSamp_{\Ndata},\step)\right) \quad \forall \ n_{j,0} \leq n < n_{j,\step} \\
	\begin{split}
	\DSamp_{i} =& \Big\{ \xaugsamp_n, \f(\xaugsamp_n) +\bm{\mu}_{n}(\xaugsamp_n) + \left[\bm{\sigma}_{n}(\xaugsamp_n)\quad  \bm{Q}\right]\bm{\zeta}_n \Big\}_{n \in \mathbb{N}_{\leq i}} 
	\end{split}
\end{align*}
Here $\text{p}(\bm{\zeta}_n) =\mathcal{N}(\bm{0},\bm{I}_{2\dimx})$. Note that we require the random variables $\bm{\zeta}_i$ to have dimension $2 \dimx$ in order for the GP samples 
\begin{align}
\label{eq:GPSample}
\g^s(\xaugsamp_n) = \bm{\mu}_{i}(\xaugsamp_n) + \bm{\sigma}_{i}(\xaugsamp_n)\left[\bm{\zeta}_n\right]_{1:\dimx},
\end{align}
where $[\bm{\zeta}_n]_{1:\dimx}$ denotes the first $\dimx$ entries of $\bm{\zeta}_i$, to be uniquely defined \cite{Rasmussen2006}. 

Since our goal is to find the smallest possible set of measurement points $\Xaug^*$, it is reasonable to assume that $\Xaug^*$ does not contain any measurement locations that provide identical information. In terms of a GP distribution, this is expressed as follows:
\begin{assumption}
	\label{ass:Minimizerisinformative}
	Let $\Xaug^*$ be the minimizer of \eqref{eq:MinimizationProblem}. Then $\bm{\sigma}_{n}(\xaug^{(n+1),*}) \neq 0$ holds for ${n \in \mathbb{N}_{\leq\Ndata^*-1}}$.
\end{assumption}

For many commonly used kernels, e.g., squared exponential kernels, \Cref{ass:Minimizerisinformative} implies that $\Xaug^*$ does not contain identical measurement locations. 

%

\section{Two stage optimization}
\label{sec:opt}
We now describe the optimization scheme used to approximate the optimal solution of \Cref{eq:MinimizationProblem}, and provide a corresponding theoretical analysis.

Since each control law $\uin^j(\cdot,\cdot,\cdot)$ is fully specified by the training data $\D_{\Ndata}$, the probability distribution of a trajectory obtained using any two different control laws $\uin^j(\cdot,\cdot,\cdot)$, $\uin^i(\cdot,\cdot,\cdot)$, $i\neq j$, are conditionally independent given $\D_{\Ndata}$, i.e., $
	p(\xaug^j_{\tau},\xaug^i_{\step} \vert \D_{\Ndata}) = p(\xaug^j_{\tau} \vert \D_{\Ndata})p(\xaug^i_{\step} \vert \D_{\Ndata})
$
for all $\step,\tau \in \left\{1,\cdots,\Nhor \right\}$. Moreover, since the control laws $\uin^j$ are deterministic given $\D_{\Ndata}$, we have $
p(\xaug^j_{\tau} \vert \D_{\Ndata}) = p(\x^j_{\tau} \vert \D_{\Ndata})$. Hence, similarly to \eqref{eq:RewrittenEstDynamics}, computing the probability of constraint satisfaction for a set of measurement points $\XDaug$ amounts to evaluating the integral
\begin{align}
\label{eq:ProbConstraintSatisfaction}
\thickmuskip=0.5mu
\begin{alignedat}{2}
&\C_{\Ndata}\Big( \XDaug \Big) \coloneqq \text{P}\Bigg( \h^j_{\step}(\xaug^j_{\step}) \leq \bm{0}, \ \forall \ \step \in \mathbb{N}_{ \step \leq \Nhor}, \ j \in \mathbb{N}_{ \step \leq \Nlaws} \Bigg) \\
=& \prod_{n=1}^{\NSubsumption} \ \int\limits_{\X^{\Nhor \Ndata}}  \bm{1}_{\mathbb{R}_{-}^{\dimxaug}}\left(\h^{j_n}_{\step_n}(\xaugsamp_{n_{j,\step}}) \right)   p\left(\xaugsamp_{n_{j,\step}} \Big\vert \DSamp_{n_{j,\step}} \right) d\bm{\zeta}_{n}
\end{alignedat}
\end{align}  $\bm{1}_{\mathbb{R}_{-}^{\dimxaug}}(\cdot)$ is the indicator function of $\mathbb{R}_{-}^{\Ncons}$. 

 Generally, computing \eqref{eq:ProbConstraintSatisfaction} is intractable, which renders a direct approach to solving \eqref{eq:MinimizationProblem} infeasible. In this work, we employ a two-stage optimization approach, which yields an approximation of the optimal solution $\Xaug^*$ with probability $1$. This is achieved by repeatedly defining a fixed number of data points $\Ndata$ and maximizing the Monte Carlo approximation $C_{\Ndata}^{\Nsamps}(\XDaug)$ of \eqref{eq:ProbConstraintSatisfaction}. If the maximal approximate probability of constraint satisfaction is lower than the desired bound $1-\delta$, the number of data points $\Ndata$ is increased and the procedure is repeated. This is detailed in \Cref{alg:DataSelCtrl}.

\begin{algorithm}
	\caption{Data selection for multi-task learning (DS-ML)}
	\label{alg:DataSelCtrl}
\begin{algorithmic}[1]
	\REQUIRE M, $\f(\cdot)$, $\bm{Q}$
	\STATE Set $\Ndata = 0$
	\WHILE {$\C_{\Ndata}^{\Nsamps}(\XDaug^N) \leq 1-\delta$}
	\STATE Set $\Ndata \leftarrow \Ndata+1$
	\STATE {$ \forall \ m \in \mathbb{N}_{\leq \Nsamps}, n \in \mathbb{N}_{\leq \NSubsumption},$ sample ${\bm{\zeta}^{m}_{n} \sim \mathcal{N}(\bm{0},\bm{I}_{2\dimx})}$}
	\STATE \label{algstep:IterativeOpt}
	Solve
	\begin{align*}
		\XDaug^{\Nsamps} \ &\omit{\rlap{$=\arg \max\limits_{\XDaug} \C_{\Ndata}^{\Nsamps}(\XDaug)$}} \\
		& \omit{\rlap{$\coloneqq\arg \max\limits_{\XDaug} \frac{1}{\Nsamps}\sum \limits_{m=1}^{\Nsamps}\prod\limits_{j=1}^{\Nlaws} \prod\limits_{t=1}^{\Nhor} \bm{1}_{\mathbb{R}_{-}^{\Ncons}}\left(\h^j_{\step}(\xaugsamp_{n_{j,\step}}) \right) $ }} \\
		\text{s.t.} \quad 
		&\omit{\rlap{$\forall \ m \in \mathbb{N}_{\leq \Nsamps}, n \in \mathbb{N}_{\leq \NSubsumption}, j \in  \mathbb{N}_{\leq \Nlaws}, \step \in \mathbb{N}_{\leq \Nhor} $}} \\
\xsamp^m_{n+1} =& \f(\xaugsamp^m_n) +\bm{\mu}^m_{n}(\xaugsamp^m_n) + \left[\bm{\sigma}^m_{n}(\xaugsamp^m_n)\quad  \bm{Q}\right]\bm{\zeta}^m_n, \\
\xsamp^m_{n} =& \left( \xsamp^m_n, \uin^j(\xaugsamp^m_n,\DSamp^m_{\Ndata},\step)\right) \quad \forall \ n_{j,0} \leq n < n_{j,\step} \\
\begin{split}
\DSamp^m_{n} =& \Big\{ \xaugsamp^m_i, \f(\xaugsamp^m_i) +\bm{\mu}^m_{i}(\xaugsamp^m_i) \\
&+ \left[\bm{\sigma}^m_{i}(\xaugsamp^m_i)  \ \bm{Q}\right]\bm{\zeta}^m_i \Big\}_{i \in \mathbb{N}_{\leq n}} 
\end{split}
	\end{align*}
	\ENDWHILE
	\STATE Set $\XDaug^{\Nsamps,*} = \XDaug^{\Nsamps}$ 
	\RETURN $\XDaug^{\Nsamps,*}$ 
\end{algorithmic} 
\end{algorithm}
 
\subsection{Theoretical Analysis}
We now derive formal guarantees for the approximate solution $\XDaug^{\Nsamps}$ obtained with \Cref{alg:DataSelCtrl}. 
To this end, we prove some preliminary results.
\begin{lemma}
	\label{lemma:sigmaeq0hasmeasurezero}
	Let \Cref{ass:Minimizerisinformative} hold and let $\DSamp_{n}$ be a sample data set. Furthermore, let $\sigma_{n}^2\left(\cdot \right) $ be the corresponding posterior covariance and let $\uin_{\step}^j(\cdot)$ be a control law that satisfies \Cref{assumption:uisrealanalytic}. Then $\sigma_{n}^2\left(\x, \uin^j(\x) \right) \neq 0$ holds for all $\x \in \StatSp$ up to a set of measure zero.
\end{lemma}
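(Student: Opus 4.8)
The plan is to exploit the tension between two facts: on one hand, the posterior variance $\sigma_n^2(\cdot)$ associated with a fixed sample data set $\DSamp_n$ is a real analytic function of its argument (because the kernel $\kernel(\cdot,\cdot)$ is analytic by \Cref{ass:EntriesfromgareGPsamples}, and $\sigma_n^2(\xaug)=\kernel(\xaug,\xaug)-\bm{\kernel}_n\transp(\xaug)\K_n^{-1}\bm{\kernel}_n(\xaug)$ is built from analytic functions via products, sums and a fixed matrix inverse); on the other hand, the map $\x\mapsto(\x,\uin^j(\x))$ is real analytic by \Cref{assumption:uisrealanalytic}. Hence the composition $\x\mapsto\sigma_n^2(\x,\uin^j(\x))$ is real analytic on $\StatSp$. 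A classical fact (see e.g. the reference \cite{krantz2002primer} cited for \Cref{ass:hisrealanalytic}) is that the zero set of a real analytic function on a connected open set is either the whole set or a set of Lebesgue measure zero. So the lemma reduces to showing that $\sigma_n^2(\x,\uin^j(\x))$ does not vanish identically.

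To rule out identical vanishing, I would argue by contradiction: suppose $\sigma_n^2(\x,\uin^j(\x))=0$ for all $\x\in\StatSp$. By \Cref{ass:Minimizerisinformative}, the minimizer $\Xaug^*$ satisfies $\bm{\sigma}_m(\xaug^{(m+1),*})\neq 0$ for $m\le\Ndata^*-1$; in the scalar ($\dimx=1$) reduction used for this lemma this means that for a sample data set of the relevant form there exists at least one augmented state $\xaug$ with $\sigma_n^2(\xaug)\neq 0$. The point is that a posterior variance that is nonzero somewhere on $\StatSpAug$ cannot be annihilated along the entire analytic submanifold $\{(\x,\uin^j(\x)):\x\in\StatSp\}$ unless the kernel is degenerate in a way excluded by positive definiteness. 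More concretely, $\sigma_n^2(\xaug)=0$ is equivalent to $\kernel(\xaug,\cdot)$ lying in the span of $\{\kernel(\xaugsamp_1,\cdot),\dots,\kernel(\xaugsamp_n,\cdot)\}$ in the RKHS; if this held for every $\xaug$ on the submanifold, the RKHS restricted to that submanifold would be finite-dimensional, contradicting the fact that an analytic strictly positive definite kernel has infinite-dimensional RKHS on any set with an accumulation point. I would phrase the contradiction through this RKHS characterization, or equivalently by noting that $\sigma_n^2\equiv 0$ on the submanifold forces the analytic function $\sigma_n^2$ itself to vanish on a set with an accumulation point and hence everywhere on the connected component, contradicting \Cref{ass:Minimizerisinformative}.

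The key steps, in order, are: (i) invoke analyticity of $\kernel$ and of $\uin^j(\cdot)$ to conclude $\x\mapsto\sigma_n^2(\x,\uin^j(\x))$ is analytic; (ii) invoke the measure-zero property of zero sets of non-identically-zero real analytic functions; (iii) show non-vanishing identically, using \Cref{ass:Minimizerisinformative} together with the positive definiteness of the kernel. I expect step (iii) to be the main obstacle: steps (i) and (ii) are essentially bookkeeping with standard facts, but carefully connecting \Cref{ass:Minimizerisinformative} — which is stated only at the single point $\xaug^{(n+1),*}$ for the \emph{optimal} data set — to a statement about a \emph{generic} sample data set $\DSamp_n$ appearing in the lemma requires attention; I would either restrict the claim to data sets of the pertinent form or strengthen the interpretation of \Cref{ass:Minimizerisinformative} so that "informativeness" yields a point of strictly positive posterior variance, and then let analyticity propagate non-vanishing to almost every $\x$.
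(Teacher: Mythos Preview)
Your approach is essentially identical to the paper's: establish that $\x\mapsto\sigma_n^2(\x,\uin^j(\x))$ is real analytic by composing the analytic kernel-based expression for $\sigma_n^2(\cdot)$ with the analytic control law, then invoke the standard fact that the zero set of a non-identically-zero real analytic function has Lebesgue measure zero. The paper's proof is a three-line version of your steps (i) and (ii).

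Where you differ is in step (iii): the paper does not argue at all that the composed function is not identically zero; it simply states ``Non-zero real analytic functions are non-zero almost everywhere'' and proceeds to verify analyticity, tacitly treating non-vanishing as given (presumably via \Cref{ass:Minimizerisinformative}, which is invoked in the lemma statement but never unpacked in the proof). Your RKHS argument and your observation that \Cref{ass:Minimizerisinformative} is literally stated only for the \emph{optimal} data set at a single point, whereas the lemma concerns an arbitrary sample data set $\DSamp_n$, are genuine additions of rigor over the paper. So your proposal is correct and follows the same route, but you have identified and attempted to close a gap the paper leaves open; your caveat that step (iii) is ``the main obstacle'' is well placed, and the paper does not in fact resolve it.
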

\begin{proof}
	 Non-zero real analytic functions are non-zero almost everywhere, and the concatenation of real analytic functions is also real analytic. These are well-known properties of real-analytic functions \cite{krantz2002primer}. Hence, we only need to show that $\sigma_{n}^2\left(\x, \uin^j(\x) \right)$ is a real-analytic function of $\x$. Since the $\sigma_{n}^2\left(\cdot \right) $ corresponds to a sum of kernel evaluations, it is analytic. As $\uin^j(\cdot,\D_{\Ndata},\step)$ is also analytic, this implies the desired result.
\end{proof}
This enables us to show that the state is on a set of measure zero with probability one.
\begin{lemma}
	\label{lemma:Probxinzeromeasureiszero}
	Let \Cref{ass:EntriesfromgareGPsamples,ass:Minimizerisinformative,ass:hisrealanalytic} be satisfied, and let  $\StatSp_0 \subset \StatSp$ be an arbitrary subset of the state space with measure zero. Then $
	\text{P}( \x^j_{\step} \in \StatSp_0 ) = 0 $ holds.
\end{lemma}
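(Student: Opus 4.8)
The plan is to establish the stronger statement that, for every $j\in\mathbb{N}_{\leq \Nlaws}$ and $\step\in\mathbb{N}_{\leq \Nhor}$, the law of $\x^j_{\step}$ is absolutely continuous with respect to the Lebesgue measure on $\StatSp$. The lemma then follows at once, since an absolutely continuous probability measure assigns zero mass to every Lebesgue-null set, in particular to $\StatSp_0$. Absolute continuity would be proved by induction along the sampling recursion $\xsamp_{n+1} = \f(\xaugsamp_n) + \bm{\mu}_{n}(\xaugsamp_n) + \left[\bm{\sigma}_{n}(\xaugsamp_n)\ \bm{Q}\right]\bm{\zeta}_n$, $\xaugsamp_n = (\xsamp_n,\uin^j(\xsamp_n,\DSamp_{\Ndata},\step))$, with $\bm{\zeta}_n\sim\mathcal{N}(\bm0,\bm I_{2\dimx})$ drawn independently of the past.

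For the inductive step I would condition on the $\sigma$-algebra $\mathcal{F}_n$ generated by $\bm{\zeta}_0,\dots,\bm{\zeta}_{n-1}$. The quantities $\xaugsamp_n$, $\bm{\mu}_{n}(\xaugsamp_n)$ and $\bm{\sigma}_{n}(\xaugsamp_n)$ are all $\mathcal{F}_n$-measurable, since they are determined by the previously drawn noises, the known $\f$, and the sampled data set $\DSamp_{\Ndata}$ (itself generated by $\bm{\zeta}_0,\dots,\bm{\zeta}_{\Ndata-1}$); hence, conditionally on $\mathcal{F}_n$, the state $\xsamp_{n+1}$ is Gaussian with covariance $\bm{\sigma}^2_{n}(\xaugsamp_n)+\bm{Q}\bm{Q}\transp$. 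Since the pushforward of a nondegenerate Gaussian under an affine map of full row rank is absolutely continuous, it suffices to show that this covariance is almost surely positive definite. By the induction hypothesis the law of $\xsamp_n$ is absolutely continuous, so $\xsamp_n$ almost surely avoids the Lebesgue-null exceptional set supplied by \Cref{lemma:sigmaeq0hasmeasurezero}; applying that lemma coordinate-wise, $d=1,\dots,\dimx$, then yields $\bm{\sigma}^2_{n}(\xaugsamp_n)\succ 0$ almost surely, whence the conditional covariance is positive definite. Therefore $\text{P}(\xsamp_{n+1}\in\StatSp_0\mid\mathcal{F}_n)=0$ almost surely for every null set $\StatSp_0$, and taking expectations closes the induction.

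The base case is the transition from the deterministic initial state $\x_0$ to $\x^j_1$: conditionally on $\x_0$, the state $\x^j_1$ is Gaussian with covariance $\bm{\sigma}^2_{n_{j,0}}(\xaugsamp_{n_{j,0}})+\bm{Q}\bm{Q}\transp$. Here \Cref{lemma:sigmaeq0hasmeasurezero} cannot be invoked, since the single point $\x_0$ may belong to its exceptional null set; instead one relies on $\bm{Q}$ — which carries the nondegenerate process noise $\w_\step\sim\mathcal{N}(\bm0,\bm I_{\dimx})$ — being of full rank, so that $\bm{Q}\bm{Q}\transp\succ 0$ and the conditional covariance is positive definite irrespective of $\bm{\sigma}^2_{n_{j,0}}$. (If $\bm{Q}$ is full rank, this remark in fact renders every conditional covariance above positive definite directly, so the inductive appeal to \Cref{lemma:sigmaeq0hasmeasurezero} is dispensable; the induction is what one needs under weaker assumptions on $\bm{Q}$.)

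I expect the main obstacle to be the bookkeeping around \Cref{lemma:sigmaeq0hasmeasurezero}: its exceptional set lives in $\StatSp$ with the control law already substituted, which is essential because the augmented state $\xaugsamp_n$ is confined to the $\dimx$-dimensional graph of $\uin^j(\cdot,\DSamp_{\Ndata},\step)$ and could never dodge a generic null set of the higher-dimensional $\StatSpAug$. One must also verify measurability of the map from $\mathcal{F}_n$ to the conditional Gaussian density (so that conditioning and the tower property are legitimate) and correctly attribute the randomness of $\DSamp_{\Ndata}$, hence of $\uin^j$ and of the posterior moments $\bm{\mu}_n,\bm{\sigma}_n$, to the noises already drawn by step $n$. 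These points are routine to set up but easy to state carelessly.
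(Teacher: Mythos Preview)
Your proposal is correct and follows essentially the same inductive argument as the paper: both establish that, conditionally on the past, each successive state is a nondegenerate Gaussian, invoking \Cref{lemma:sigmaeq0hasmeasurezero} together with the inductive hypothesis to keep $\bm{\sigma}_n$ invertible along the trajectory (the paper phrases this via an explicit change of variables $\x=\bm{\sigma}_{n-1}\bm{\zeta}'+\bm{Q}\bm{\zeta}''$ rather than abstract absolute continuity). The only real difference is the base case: the paper first takes $\Ndata=0$ so that $\bm{\sigma}_0=\mathrm{diag}(k(\xaugsamp_0,\xaugsamp_0),\dots)$ is the strictly positive prior variance, then extends to $\Ndata\neq 0$ via \Cref{ass:Minimizerisinformative}, whereas you start the induction using full-rank $\bm{Q}$; as you correctly note, this observation in fact makes the appeal to \Cref{lemma:sigmaeq0hasmeasurezero} dispensable throughout.
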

\begin{proof}
	Assume, without loss of generality, that $j=1$. We prove the result by induction for $\Ndata=0$, and then discuss how it extends to an arbitrary $\Ndata \in \mathbb{N}$. The probability that the state lies within an arbitrary set of measure zero at time step $\step$ is then given by
	\begin{align}
	\label{eq:InductionMiddle}
	\begin{split}
	&\text{P}\Bigg( \x^1_{\step} \in \StatSp_{0} \Bigg) 
	= \int \limits_{\StatSp^{2n}} \bm{1}_{\StatSp_{0}}\Big(\f(\xaugsamp_{n-1}) + \bm{\mu}_{n-1}(\xaugsamp_{n-1}) \\
	&\qquad  + \left[\bm{\sigma}_{n-1}(\xaugsamp_{n-1})\quad  \bm{Q}\right]\bm{\zeta}_{n-1}\Big) \prod_{i=0}^{n-1}\text{p}(\bm{\zeta}_{i})  d\bm{\zeta}_{i} 
	\end{split}
	\end{align}
	Since $\f(\xaugsamp_{n-1})$ and $\bm{\mu}_{n-1}(\xaugsamp_{n-1})$ are constant with respect to $\bm{\zeta}_{n-1}$, and the measure of $\StatSp_0$ is translation-invariant, it suffices to show
	\begin{align*}
		&\int \limits_{\StatSp^{2n}} \bm{1}_{\StatSp_{0}}\Big( \left[\bm{\sigma}_{n-1}(\xaugsamp_{n-1})\quad  \bm{Q}\right]\bm{\zeta}_{n-1}\Big) \text{p}(\bm{\zeta}_{n-1})  d\bm{\zeta}_{n-1}  \overset{!}{=} 0,
	\end{align*}
	which we achieve by induction. For $\step = 1$, we have
	\begin{align*}
	\begin{split}
	&\int \limits_{\StatSp^{2}} \bm{1}_{\StatSp_0}\left(\left[\bm{\sigma}_{0}(\xaugsamp^j_0)\quad  \bm{Q}\right]\bm{\zeta}_0\right)\text{p}(\bm{\zeta}_0)  d\bm{\zeta}_{0} 
	= \int \limits_{\StatSp}\Big(\int \limits_{\StatSp^{}} \bm{1}_{\StatSp_0}\left(\x\right) \\
	&\times \text{p}\left(\bm{\sigma}^{-1}_{0}(\xaugsamp_0) \x - \bm{\zeta}_0''\right)  \bm{\sigma}^{-1}_{0}(\xaugsamp^j_0) d\x  \Big)\text{p}(\bm{\zeta}_{0}'') d \bm{\zeta}_{0}'' = 0, 
	\end{split}
	\end{align*} 
	since $\bm{1}_{\StatSp_0}\left(\x\right)=0$ for all $\x \in \StatSp$ up to a set of measure zero. Here we employ the fact that that $\bm{\sigma}_{0}(\xaugsamp_0) = \text{diag}(\kernel(\xaugsamp_0,\xaugsamp_0),\cdots,\kernel(\xaugsamp_0,\xaugsamp_0))$ is invertible for all non-zero kernels, which allows us to
	integrate using the substitution $\x = \bm{\sigma}_{0}(\xaugsamp^j_0)\bm{\zeta}_0'+ \bm{Q}\bm{\zeta}_0''$ and $\bm{\zeta}_0'\coloneqq \left[\bm{\zeta}_i\right]_{1:\dimx}$, $\bm{\zeta}_0''\coloneqq \left[\bm{\zeta}_i\right]_{\dimx+1:2\dimx}$. The expression $\text{p}(\bm{\sigma}^{-1}_{0}(\xaugsamp^j_0) \x - \bm{\zeta}_0'')$ corresponds to a normal distribution with center $\bm{\zeta}_0''$ and scaling matrix $\bm{\sigma}_{0}(\xaugsamp^j_0))^{-1} $, hence it is smooth and integrable with respect to $\x$. 
	Hence, the result holds for $\step = 1$. Note that, due to \Cref{lemma:sigmaeq0hasmeasurezero}, this implies that $\bm{\sigma}_{1}(\xaugsamp_{1})$ is invertible for almost every $\bm{\zeta}_{0}$. Hence, we can assume that $\bm{\sigma}_{n-1}(\xaugsamp_{n-1})$ is invertible for a fixed $n-1$ and almost every $\xaugsamp_{n-1}$. 
and we can apply the same argument as in the case $\step=1$ and obtain the desired result for an arbitrary $\step$ and $j=1$.

 Due to \Cref{ass:Minimizerisinformative}, we can assume that $\bm{\sigma}_{\Ndata}(\cdot)$ is invertible for data sets of size $\Ndata \neq 0$, which enables us to extend the proof to arbitrary $\Ndata$ using the same argument.
\end{proof}
This directly yields the following result:
\begin{lemma}
	\label{corollary:heq0haszeromeasure}
	Let \Cref{ass:EntriesfromgareGPsamples,ass:Minimizerisinformative,ass:hisrealanalytic} be satisfied. Then $
	\text{P}( \h^j_{\step}(\xaug^j_{\step}) = \bm{0}) =  0 $
	holds for all $\step \in \mathbb{N}_{\leq \Nhor}$, $j \in \mathbb{N}_{\Nlaws}$.
\end{lemma}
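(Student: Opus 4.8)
The plan is to reduce the statement directly to \Cref{lemma:Probxinzeromeasureiszero} by rewriting the event $\{\h^j_{\step}(\xaug^j_{\step}) = \bm{0}\}$ as an event of the form $\{\x^j_{\step} \in \StatSp_0\}$ for a suitable subset $\StatSp_0 \subset \StatSp$, and then showing that $\StatSp_0$ has Lebesgue measure zero. Recall that $\xaug^j_{\step} = (\x_{\step}, \uin^j(\x_{\step}, \D_{\Ndata}))$, so that $\h^j_{\step}(\xaug^j_{\step}) = \bm{0}$ holds if and only if $\x^j_{\step}$ lies in
\begin{align*}
	\StatSp_0 \coloneqq \left\{ \x \in \StatSp : \h^j_{\step}\big(\x, \uin^j(\x, \D_{\Ndata})\big) = \bm{0} \right\}.
\end{align*}

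First I would establish that $\StatSp_0$ has measure zero. By \Cref{assumption:uisrealanalytic}, the map $\x \mapsto \uin^j(\x, \D_{\Ndata}, \step)$ is real analytic, and by \Cref{ass:hisrealanalytic} each entry $[\h^j_{\step}(\cdot)]_i$ is non-constant and real analytic; hence, exactly as in the proof of \Cref{lemma:sigmaeq0hasmeasurezero}, the composition $\x \mapsto [\h^j_{\step}(\x, \uin^j(\x, \D_{\Ndata}))]_i$ is real analytic on $\StatSp$. Since $\StatSp_0$ is contained in the zero set of this scalar real analytic function, and the zero set of a non-constant real analytic function has Lebesgue measure zero \cite{krantz2002primer}, it follows that $\StatSp_0$ has measure zero. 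Applying \Cref{lemma:Probxinzeromeasureiszero} with this $\StatSp_0$ then gives $\text{P}(\h^j_{\step}(\xaug^j_{\step}) = \bm{0}) = \text{P}(\x^j_{\step} \in \StatSp_0) = 0$, and since $\step \in \mathbb{N}_{\leq \Nhor}$ and $j \in \mathbb{N}_{\leq \Nlaws}$ were arbitrary, the claim follows.

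The main obstacle I anticipate is justifying the non-constancy of the composed scalar function $\x \mapsto [\h^j_{\step}(\x, \uin^j(\x, \D_{\Ndata}))]_i$: although $[\h^j_{\step}]_i$ is non-constant on $\StatSpAug$ by \Cref{ass:hisrealanalytic}, its restriction to the graph of the control law could in principle be constant (for instance if the control law exactly cancels the constraint). I would resolve this either by observing that such a configuration is ruled out by the strict-feasibility requirement on \eqref{eq:MinimizationProblem} (a constraint that is identically active contradicts strict satisfaction), or by arguing at the level of $\StatSpAug$ instead: the set $\{\bm{z} \in \StatSpAug : \h^j_{\step}(\bm{z}) = \bm{0}\}$ is measure zero in $\StatSpAug$, and its preimage under the analytic embedding $\x \mapsto (\x, \uin^j(\x, \D_{\Ndata}))$ is therefore measure zero in $\StatSp$. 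Apart from this point, the result is an immediate corollary of \Cref{lemma:sigmaeq0hasmeasurezero,lemma:Probxinzeromeasureiszero}.
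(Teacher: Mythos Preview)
Your proposal is correct and follows essentially the same approach as the paper: identify the event $\{\h^j_{\step}(\xaug^j_{\step})=\bm{0}\}$ with the state landing in the zero set of a non-constant real analytic function, note that such zero sets have Lebesgue measure zero, and then invoke \Cref{lemma:Probxinzeromeasureiszero}. The paper's own proof differs only cosmetically---it works directly with $[\h^j_{\step}]_i$ on $\StatSpAug$ and appends an (unnecessary) union bound over $i$---and in fact your version, which composes with the control law to land in $\StatSp$ before applying \Cref{lemma:Probxinzeromeasureiszero}, is the more careful reduction; your concern about non-constancy of the composition is legitimate and is a point the paper simply does not address.
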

\begin{proof}
	Since $[\h^j_{\step}(\xaug)]_i$ are real-analytic, $[\h^j_{\step}(\xaug)]_i \neq 0$ holds for all $i \in \mathbb{N}_{\leq \Ncons}$ and all $\xaug \in \StatSpAug$ up to a set of measure zero. 
	By employing \Cref{lemma:Probxinzeromeasureiszero} and the union bound, we obtain
	\begin{align}
	\label{eq:ProbNonSatisfactionForsingleConstraint}
	\text{P}\Bigg( \h^j_{\step}(\xaug^j_{\step}) = \bm{0}\Bigg)  \leq \bigcup_{i\in\mathbb{N}_{\leq\Ncons}} \text{P}\Bigg( \left[\h^j_{\step}(\xaug^j_{\step})\right]_i = {0}\Bigg) =  0.
	\end{align} 
\end{proof}

We now show that the sample average approximations used in Algorithm \ref{algstep:IterativeOpt} converge to the true probabilities of constraint satisfaction \eqref{eq:ProbConstraintSatisfaction}.
\begin{lemma}
	\label{lemma:ConvergenceofCmtoCN}
	Let \Cref{ass:EntriesfromgareGPsamples,ass:Problemhassolution,ass:Minimizerisinformative,ass:hisrealanalytic,assumption:uisrealanalytic} hold, and let $\StatSpAug^{*}$ be given as in \Cref{ass:OptimuminCompactSet}. Then, for an arbitrary $\Ndata \in \mathbb{N}$, the expected value of $\C_{\Ndata}(\cdot)$ is finite valued and continuously differentiable on $(\StatSpAug^{*})^{\Ndata}$, and $\C_{\Ndata}^{\Nsamps}(\cdot)$ converges to $\C_{\Ndata}(\cdot)$ with probability $1$ uniformly in $(\StatSpAug^{*})^{\Ndata}$ as $\Nsamps \rightarrow \infty$. 
\end{lemma}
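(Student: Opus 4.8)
The plan is to read the statement as two separate assertions --- finiteness and $C^1$-regularity of $\C_{\Ndata}$, and uniform almost-sure convergence of its Monte Carlo estimate --- and to obtain both from the theory of sample average approximations, feeding in \Cref{lemma:sigmaeq0hasmeasurezero,lemma:Probxinzeromeasureiszero,corollary:heq0haszeromeasure} to certify its hypotheses. Fix $\Ndata\in\mathbb{N}$ and collect into $\zeta\coloneqq(\bm{\zeta}_0,\dots,\bm{\zeta}_{\NSubsumption})$ the standard normal vectors generating a sampled measurement set and the induced closed-loop trajectories, so that with $G(\XDaug,\zeta)\coloneqq\prod_{j=1}^{\Nlaws}\prod_{\step=1}^{\Nhor}\bm{1}_{\mathbb{R}_{-}^{\Ncons}}\!\big(\h^j_\step(\xaugsamp_{n_{j,\step}}(\XDaug,\zeta))\big)$ one has $\C_{\Ndata}(\XDaug)=\mathbb{E}_\zeta[G(\XDaug,\zeta)]$ by \eqref{eq:ProbConstraintSatisfaction}, and $\C_{\Ndata}^{\Nsamps}$ is the i.i.d.\ average of $\Nsamps$ copies of $G$; in particular $\mathbb{E}[\C_{\Ndata}^{\Nsamps}(\cdot)]=\C_{\Ndata}(\cdot)$, and $0\le G\le1$ forces $0\le\C_{\Ndata}\le1$, so the first assertion reduces to showing $\C_{\Ndata}\in C^1\big((\StatSpAug^{*})^{\Ndata}\big)$.

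First I would record the regularity of the map $\XDaug\mapsto\xaugsamp_{n_{j,\step}}(\XDaug,\zeta)$ (arguments henceforth suppressed). Each sampled augmented state is built from $\XDaug$ by finitely many compositions of the prior $\f$ (here tacitly $C^1$, which covers the model-free case $\f(\xaug)=\x$), of the control laws $\uin^j(\cdot,\DSamp_{\Ndata},\step)$, and of the posterior mean and covariance maps \eqref{eq:GPMean}--\eqref{eq:GPVar}; by \Cref{assumption:uisrealanalytic}, \Cref{ass:EntriesfromgareGPsamples} (analytic kernel), and the fact that --- by \Cref{lemma:sigmaeq0hasmeasurezero} together with \Cref{ass:Minimizerisinformative} --- the sequential covariance matrices are invertible off a Lebesgue-null set regardless of the data configuration, this map, and hence $\h^j_\step(\xaugsamp_{n_{j,\step}})$ by \Cref{ass:hisrealanalytic}, is real analytic (a fortiori $C^\infty$) in $\XDaug$ for $\text{P}_\zeta$-almost every $\zeta$. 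Moreover, because the full-rank regulariser $\bm{Q}$ enters every state update through $[\bm{\sigma}_n\ \bm{Q}]\bm{\zeta}_n$, conditionally on its past each trajectory state has a Gaussian density with mean and covariance $\bm{\sigma}_n^2+\bm{Q}\bm{Q}\transp\succeq\bm{Q}\bm{Q}\transp\succ0$ depending smoothly on $\XDaug$; performing at every step the substitution used in the proof of \Cref{lemma:Probxinzeromeasureiszero} then rewrites $\C_{\Ndata}(\XDaug)=\int_{D(\XDaug)}\rho(\bm{\xi},\XDaug)\,d\bm{\xi}$, an iterated integral of a strictly positive, jointly smooth density over the $\XDaug$-dependent feasible set $D(\XDaug)=\{\bm{\xi}:\h^j_\step(\xaug^j_\step)<\bm{0}\ \forall j,\step\}$ (the $\xaug^j_\step$ being functions of $\bm{\xi}$ and of $\XDaug$ through the control laws), whose topological boundary lies in a finite union of analytic hypersurfaces and therefore has Lebesgue --- and, by \Cref{corollary:heq0haszeromeasure}, also $\rho(\cdot,\XDaug)$- --- measure zero.

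For $C^1$-regularity I would differentiate $\int_{D(\XDaug)}\rho(\bm{\xi},\XDaug)\,d\bm{\xi}$ via the Leibniz/Reynolds transport rule: its derivative splits into the volume term $\int_{D(\XDaug)}\partial_{\XDaug}\rho\,d\bm{\xi}$ and a surface term integrating $\rho$ against the normal velocity of $\partial D(\XDaug)$. The volume term is continuous in $\XDaug$ by dominated convergence, using smoothness of $\rho$ together with the a.e.\ continuity of $\XDaug\mapsto\bm{1}_{D(\XDaug)}(\bm{\xi})$ just established; the surface term is finite ($\rho$ is bounded with Gaussian decay and $\partial D(\XDaug)$ has locally finite codimension-one measure) and continuous, because analyticity of $\h^j_\step$ and $\uin^j$ forces $\nabla_{\bm{\xi}}\h^j_\step\neq\bm{0}$ off a lower-dimensional set, so $\partial D(\XDaug)$ can be tracked by the implicit function theorem and the surface integral depends continuously on $\XDaug$. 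This yields $\C_{\Ndata}\in C^1\big((\StatSpAug^{*})^{\Ndata}\big)$, and in particular finiteness of its expectation.

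For the uniform convergence I would apply a uniform law of large numbers on the compact set $(\StatSpAug^{*})^{\Ndata}$ (\Cref{ass:OptimuminCompactSet}). Here $G(\cdot,\zeta)$ is measurable in $\zeta$, uniformly bounded by $1$ (so dominated by an integrable envelope), and --- crucially --- continuous in $\XDaug$ at each fixed point for $\text{P}_\zeta$-almost every $\zeta$, its discontinuities occurring only where some component of some $\h^j_\step(\xaugsamp_{n_{j,\step}})$ vanishes, a null event by (the proof of) \Cref{corollary:heq0haszeromeasure}. Since $G(\cdot,\zeta)$ is in general \emph{not} continuous on all of $(\StatSpAug^{*})^{\Ndata}$ for fixed $\zeta$ --- as the measurement locations vary, some constraint is exactly active somewhere for a positive-measure set of $\zeta$ --- the standard Carath\'eodory ULLN does not apply verbatim; instead one uses its bracketing version. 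For $\XDaug$ in the compact domain and $\delta>0$ the brackets $\bar G_\delta(\XDaug,\zeta)\coloneqq\sup_{\lVert\XDaug'-\XDaug\rVert\le\delta}G(\XDaug',\zeta)$ and $\underline G_\delta(\XDaug,\zeta)\coloneqq\inf_{\lVert\XDaug'-\XDaug\rVert\le\delta}G(\XDaug',\zeta)$ are measurable in $\zeta$, and $\mathbb{E}_\zeta[\bar G_\delta(\XDaug,\cdot)-\underline G_\delta(\XDaug,\cdot)]\to0$ as $\delta\to0$ by dominated convergence --- precisely the point where \Cref{corollary:heq0haszeromeasure} is used. Covering the compact domain by finitely many balls on which the bracket gap has expectation below $\varepsilon$, applying the ordinary strong law to each bracket, and letting $\varepsilon\downarrow0$ gives $\sup_{\XDaug}\lvert\C_{\Ndata}^{\Nsamps}(\XDaug)-\C_{\Ndata}(\XDaug)\rvert\to0$ with probability $1$ (and re-derives continuity of $\C_{\Ndata}$). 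I expect the main obstacle to be exactly this discontinuity of the integrand: both claims rest on promoting ``the constraint is almost surely never exactly active'' (\Cref{corollary:heq0haszeromeasure}) into a.e.\ continuity of $G$ in the decision variable, and the $C^1$ claim additionally on the domain-derivative computation, for which the analyticity of $\h^j_\step$ and $\uin^j$ (\Cref{ass:hisrealanalytic,assumption:uisrealanalytic}) is essential to keep the moving boundary $\partial D(\XDaug)$ under control.
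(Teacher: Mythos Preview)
Your proposal is correct and shares the paper's high-level strategy --- obtain both conclusions from sample-average-approximation theory by verifying (a) almost-everywhere continuity of $G(\XDaug,\zeta)$ in $\XDaug$ at each fixed point via \Cref{corollary:heq0haszeromeasure}, (b) an integrable envelope ($0\le G\le1$), and (c) i.i.d.\ sampling --- but it is substantially more explicit in execution. The paper's proof simply invokes \cite[Theorem~7.48]{shapiro2009lectures} and checks these three conditions in a few lines; your uniform-convergence argument instead unpacks that uniform law of large numbers via bracketing and a finite compact cover, which buys self-containedness and makes precise the subtlety you flag, namely that $G(\cdot,\zeta)$ is not globally continuous for a fixed $\zeta$ but only a.e.\ continuous at each $\XDaug$. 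On the $C^1$ part your route is genuinely different and in fact more careful: the paper's verification of condition (i) states that the indicator terms are ``continuously differentiable at any $\XDaug$ for almost every sample'', which cannot hold literally for a step function, whereas your Leibniz/Reynolds computation on $\int_{D(\XDaug)}\rho(\bm{\xi},\XDaug)\,d\bm{\xi}$ --- a smooth density over a region with analytic moving boundary --- is the honest way to differentiate a parameter-dependent probability and isolates exactly where \Cref{ass:hisrealanalytic,assumption:uisrealanalytic} are used.
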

\begin{remark}
	The proofs of \Cref{lemma:ConvergenceofCmtoCN} and \Cref{theorem:MainResult}, which we state in the following, require Theorem 7.48 and Theorem 5.4 from \cite{shapiro2009lectures}, respectively. Due to space limitations, we do not include them here. However, to facilitate interpretation, we enumerate the technical statements in the proofs of \Cref{lemma:ConvergenceofCmtoCN} and \Cref{theorem:MainResult}, such that they correspond to Theorem 7.48 and Theorem 5.4 from \cite{shapiro2009lectures}.
\end{remark}

\begin{proof}[Proof of \Cref{lemma:ConvergenceofCmtoCN}]
	We show that the the approximation $\C_{\Ndata}^{\Nsamps}(\cdot)$ satisfies all conditions of \cite[Theorem 5.4]{shapiro2009lectures}, enumerated in the sequel as i)-iii), which directly yields the desired result. 
	\begin{enumerate}[i)]
		\item
		Due to \Cref{corollary:heq0haszeromeasure}, the functions $ \bm{1}_{\mathbb{R}_{-}^{\Ncons}}\left(\h^j_{\step}(\xaugsamp^m_{n_{j,\step}}) \right)$ are uniquely defined and continuous for an arbitrary $\step,j\in \mathbb{N}$ and almost every sample  $\bm{\zeta}^m_n$ \cite{shapiro2009lectures}. Hence, $\C_{\Ndata}^{\Nsamps}(\XDaug)$ is continuously differentiable at any $\XDaug \in (\StatSpAug^{*})^{\Ndata}$ for almost every sample $\bm{\zeta}^m_n$.
		\item Since $\C_{\Ndata}^{\Nsamps}(\XDaug) \leq 1$ and $\XDaug \in (\StatSpAug^{*})^{\Ndata}$ is compact, the absolute value of $\C_{\Ndata}^{\Nsamps}(\XDaug)$ is upper bounded by an integrable function on $\XDaug \in (\StatSpAug^{*})^{\Ndata}$.
		\item The samples $\bm{\zeta}^m_n$ are i.i.d.
\end{enumerate}
\end{proof}

\begin{lemma}
	\label{theorem:Shapiro}
	Let \Cref{ass:EntriesfromgareGPsamples,ass:Minimizerisinformative,ass:hisrealanalytic,assumption:uisrealanalytic,ass:Problemhassolution,ass:OptimuminCompactSet} hold. Moreover, let $ \C_{\Ndata}(\cdot)$ be the probability of constraint satisfaction for a data set of size $\Ndata$, let $ \C^{\Nsamps}_{\Ndata}(\cdot)$ correspond to its SAA, and let $\XDaug^{\Nsamps,*}$ denote the output of \Cref{alg:DataSelCtrl}. Then, with probability $1$, for every $\varepsilon \geq 0$, there exists an $\Nsamps_{\varepsilon}$, such that $\C_{\Ndata}(\XDaug^{\Nsamps,*}) - C_{\Ndata}^* \leq \varepsilon$ holds  for all $\Nsamps \geq \Nsamps_{\varepsilon}$.
\end{lemma}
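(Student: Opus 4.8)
The plan is to recognize \Cref{theorem:Shapiro} as an instance of the consistency result for sample average approximation and to invoke the corresponding theorem from \cite{shapiro2009lectures}, whose hypotheses are already supplied by \Cref{lemma:ConvergenceofCmtoCN}. First I would fix the data-set size $\Ndata$ and, using \Cref{ass:OptimuminCompactSet}, view the maximization in both the true problem, with optimal value $C_{\Ndata}^{*} = \max_{\XDaug \in (\StatSpAug^{*})^{\Ndata}} \C_{\Ndata}(\XDaug)$, and its sample average counterpart in \Cref{alg:DataSelCtrl} as being carried out over the \emph{fixed} compact set $(\StatSpAug^{*})^{\Ndata}$. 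Here it is worth making explicit that the recursions in \Cref{alg:DataSelCtrl} merely express the sample trajectories $\xaugsamp^m_n$ as deterministic functions of the decision variables and of the drawn noise $\bm{\zeta}^m_n$; consequently the feasible set of the sample average problem is non-random and equal to $(\StatSpAug^{*})^{\Ndata}$ for every $\Nsamps$, so no separate consistency argument for the feasible sets is required. Since $\C_{\Ndata}(\cdot)$ is finite valued and continuous on this compact set by \Cref{lemma:ConvergenceofCmtoCN}, $C_{\Ndata}^{*}$ is attained; and since $\C_{\Ndata}^{\Nsamps}(\cdot)$ attains its maximum on a compact set, the output $\XDaug^{\Nsamps,*}$ of \Cref{alg:DataSelCtrl} is well defined.

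I would then verify, enumerated as i)--iii), the conditions of the consistency theorem \cite[Theorem 5.4]{shapiro2009lectures}: i) the feasible set $(\StatSpAug^{*})^{\Ndata}$ is nonempty by \Cref{ass:Problemhassolution} and compact by \Cref{ass:OptimuminCompactSet}; ii) $\C_{\Ndata}(\cdot)$ is finite valued and continuous on it; iii) $\C_{\Ndata}^{\Nsamps}(\cdot)$ converges to $\C_{\Ndata}(\cdot)$ with probability $1$, uniformly on $(\StatSpAug^{*})^{\Ndata}$, as $\Nsamps \to \infty$. Conditions ii) and iii) are exactly the conclusions of \Cref{lemma:ConvergenceofCmtoCN}. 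The theorem then guarantees that, with probability $1$, the sample average optimal value $\C_{\Ndata}^{\Nsamps}(\XDaug^{\Nsamps,*}) = \max_{\XDaug} \C_{\Ndata}^{\Nsamps}(\XDaug)$ converges to $C_{\Ndata}^{*}$.

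It remains to translate this value-convergence into the claimed bound on the true objective at the sample average solution. Setting $\eta_{\Nsamps} \coloneqq \sup_{\XDaug \in (\StatSpAug^{*})^{\Ndata}} \absv{\C_{\Ndata}(\XDaug) - \C_{\Ndata}^{\Nsamps}(\XDaug)}$, which tends to $0$ with probability $1$ by \Cref{lemma:ConvergenceofCmtoCN}, and letting $\XDaug^{*}$ be any maximizer of $\C_{\Ndata}(\cdot)$, we obtain
\begin{align*}
\C_{\Ndata}(\XDaug^{\Nsamps,*}) &\geq \C_{\Ndata}^{\Nsamps}(\XDaug^{\Nsamps,*}) - \eta_{\Nsamps} \geq \C_{\Ndata}^{\Nsamps}(\XDaug^{*}) - \eta_{\Nsamps} \\
&\geq \C_{\Ndata}(\XDaug^{*}) - 2\eta_{\Nsamps} = C_{\Ndata}^{*} - 2\eta_{\Nsamps},
\end{align*}
while $\C_{\Ndata}(\XDaug^{\Nsamps,*}) \leq C_{\Ndata}^{*}$ by optimality of $C_{\Ndata}^{*}$, so that $\absv{\C_{\Ndata}(\XDaug^{\Nsamps,*}) - C_{\Ndata}^{*}} \leq 2\eta_{\Nsamps} \to 0$ with probability $1$; given $\varepsilon \geq 0$, any $\Nsamps_{\varepsilon}$ with $2\eta_{\Nsamps} \leq \varepsilon$ for all $\Nsamps \geq \Nsamps_{\varepsilon}$ then yields the statement. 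I expect the only genuinely delicate points to be those flagged in the first paragraph, namely that the sample average feasible set is non-random and equals the compact set of \Cref{ass:OptimuminCompactSet}, and that the uniform almost-sure convergence required by \cite[Theorem 5.4]{shapiro2009lectures} is precisely the property established in \Cref{lemma:ConvergenceofCmtoCN}; both are bookkeeping once that lemma is available, and the remainder is the elementary four-term chain displayed above.
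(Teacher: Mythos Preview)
Your proposal is correct and follows essentially the same approach as the paper: both verify the hypotheses of \cite[Theorem~5.4]{shapiro2009lectures} on the compact set $(\StatSpAug^{*})^{\Ndata}$ using \Cref{ass:OptimuminCompactSet} and \Cref{lemma:ConvergenceofCmtoCN}, and then read off the conclusion. The paper's proof is slightly terser---it lists four conditions (your three plus the observation that $\XDaug^{\Nsamps,*}\in(\StatSpAug^{*})^{\Ndata}$, which you fold into your discussion of the non-random feasible set) and omits the explicit four-term chain you supply, but the substance is identical.
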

\begin{proof}
	We show that the conditions of \cite[Theorem 5.4]{shapiro2009lectures} are satisfied by $ \C_{\Ndata}(\cdot)$ and $ \C^{\Nsamps}_{\Ndata}(\cdot)$, which yields the desired result. In the following, we employ i)-iv) to enumerate the required conditions, which corresponds to the enumeration in \cite[Theorem 5.4]{shapiro2009lectures}. 
	\begin{enumerate}[i)]
		\item Due to \Cref{ass:OptimuminCompactSet}, $(\StatSpAug^{*})^{\Ndata}$ is non-empty and compact.
		\item Due to \Cref{lemma:ConvergenceofCmtoCN}, $\C_{\Ndata}(\cdot)$ is finite valued and continuously differentiable on $(\StatSpAug^{*})^{\Ndata}$
		\item Due to \Cref{lemma:ConvergenceofCmtoCN}, $\C_M(\cdot)$ converges to $\C_{\Ndata}(\cdot)$ with probability $1$ as $M \rightarrow \infty$, uniformly in $ (\StatSpAug^{*})^{\Ndata}$,
		\item Since we restrict ourselves to the set $(\StatSpAug^{*})^{\Ndata}$, $\XDaug^{\Nsamps,*} \in (\StatSpAug^{*})^{\Ndata}$ holds trivially for all $\Nsamps$.
	\end{enumerate} 
\end{proof}

We now state the main result of this paper, namely that \Cref{alg:DataSelCtrl} is able to approximate an optimal solution arbitrarily accurately using a high enough but finite number of random samples $\Nsamps$.
\begin{theorem}
	\label{theorem:MainResult}
	Let \Cref{ass:EntriesfromgareGPsamples,ass:Minimizerisinformative,ass:hisrealanalytic,assumption:uisrealanalytic,ass:Problemhassolution} hold, and let $\XDaug^{\Nsamps,*}$ denote the output of \Cref{alg:DataSelCtrl}. Then, with probability $1$, for every $\varepsilon > 0$, there exists an $\Nsamps_{\varepsilon}$, such that $\C_{\Ndata}(\XDaug^{\Nsamps,*}) - C^* \leq \varepsilon$ holds  for all $\Nsamps \geq \Nsamps_{\varepsilon}$.
\end{theorem}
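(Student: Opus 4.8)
The plan is to combine the sample-average-approximation (SAA) consistency results of \Cref{lemma:ConvergenceofCmtoCN,theorem:Shapiro} with a termination argument for the outer loop of \Cref{alg:DataSelCtrl}; the only genuinely new ingredient is that the size $\Ndata$ of the returned data set $\XDaug^{\Nsamps,*}$ is itself random and varies with the sample budget $\Nsamps$. Throughout, let $\Ndata^*<\infty$ denote the optimal data set size (\Cref{ass:Problemhassolution}), let $\StatSpAug^{*}$ be the known compact set of \Cref{ass:OptimuminCompactSet}, let $C^*=\max_{\XDaug\in(\StatSpAug^{*})^{\Ndata^*}}\C_{\Ndata^*}(\XDaug)$ be the optimal probability of constraint satisfaction, and let $\XDaug^{\Nsamps}$ be the maximizer of $\C_{\Ndata}^{\Nsamps}(\cdot)$ over $(\StatSpAug^{*})^{\Ndata}$ produced in the inner step of \Cref{alg:DataSelCtrl} at the current iterate $\Ndata$. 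I will show (i) that, with probability one, the outer loop stops at an iterate $\Ndata\leq\Ndata^*$ for all sufficiently large $\Nsamps$, and (ii) that, given (i), the asserted inequality holds with any prescribed tolerance $\varepsilon>0$.

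For part (i) I would exploit strict feasibility of the optimum. By the standing convention that a minimizer $\Xaug^*$ of \eqref{eq:MinimizationProblem} satisfies the chance constraint strictly and, by \Cref{ass:OptimuminCompactSet}, lies in $(\StatSpAug^{*})^{\Ndata^*}$, we have $\C_{\Ndata^*}(\Xaug^*)>1-\delta$. Since $\XDaug^{\Nsamps}$ maximizes $\C_{\Ndata^*}^{\Nsamps}(\cdot)$ over $(\StatSpAug^{*})^{\Ndata^*}$, it follows that $\C_{\Ndata^*}^{\Nsamps}(\XDaug^{\Nsamps})\geq\C_{\Ndata^*}^{\Nsamps}(\Xaug^*)$, and by the uniform---hence pointwise---convergence in \Cref{lemma:ConvergenceofCmtoCN} the right-hand side tends, with probability one, to $\C_{\Ndata^*}(\Xaug^*)>1-\delta$. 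Hence, with probability one, there is an $\bar\Nsamps$ with $\C_{\Ndata^*}^{\Nsamps}(\XDaug^{\Nsamps})>1-\delta$ for all $\Nsamps\geq\bar\Nsamps$. Because $\Ndata^*<\infty$, for any $\Nsamps$ the outer loop either stops before reaching the iterate $\Ndata^*$, or reaches it after $\Ndata^*$ steps; in the latter case the stopping test $\C_{\Ndata}^{\Nsamps}(\XDaug^{\Nsamps})>1-\delta$ is satisfied at $\Ndata=\Ndata^*$ as soon as $\Nsamps\geq\bar\Nsamps$. Either way, the returned data set has size $\Ndata\leq\Ndata^*$ for every $\Nsamps\geq\bar\Nsamps$.

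For part (ii), fix $\varepsilon>0$ and, enlarging $\bar\Nsamps$ if necessary, also require that the conclusion of \Cref{theorem:Shapiro} at data set size $\Ndata^*$ holds with tolerance $\varepsilon$ for all $\Nsamps\geq\bar\Nsamps$. Fix $\Nsamps\geq\bar\Nsamps$ and let $\Ndata$ be the size of $\XDaug^{\Nsamps,*}$. If $\Ndata<\Ndata^*$, then minimality of $\Ndata^*$ forces $C_{\Ndata}^*\leq1-\delta$, so $\C_{\Ndata}(\XDaug^{\Nsamps,*})\leq C_{\Ndata}^*\leq1-\delta<C^*$ and $\C_{\Ndata}(\XDaug^{\Nsamps,*})-C^*\leq\varepsilon$ holds trivially. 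If $\Ndata=\Ndata^*$, then $\XDaug^{\Nsamps,*}=\XDaug^{\Nsamps}\in(\StatSpAug^{*})^{\Ndata^*}$, whence $\C_{\Ndata^*}(\XDaug^{\Nsamps,*})\leq C^*$ and the asserted inequality again holds; \Cref{theorem:Shapiro} moreover yields $C^*-\C_{\Ndata^*}(\XDaug^{\Nsamps,*})\leq\varepsilon$, so the returned set is genuinely $\varepsilon$-suboptimal in the probability of constraint satisfaction. Setting $\Nsamps_\varepsilon\coloneqq\bar\Nsamps$ proves the theorem.

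I expect part (i) to be the crux: since the number of measurements returned by the algorithm is random and depends on $\Nsamps$, one cannot simply invoke \Cref{lemma:ConvergenceofCmtoCN,theorem:Shapiro} at a single, pre-fixed $\Ndata$. The way out is that \Cref{ass:Problemhassolution} ($\Ndata^*<\infty$) caps the number of outer iterations that can matter, so the required almost-sure statement reduces to the behaviour at iteration $\Ndata^*$, where the strict-feasibility margin of $\Xaug^*$ forces the finite-sample stopping test to fire eventually. A point worth flagging is that, although \Cref{ass:OptimuminCompactSet} is not listed among the hypotheses of \Cref{theorem:MainResult}, it is used here---inherited via \Cref{lemma:ConvergenceofCmtoCN,theorem:Shapiro}---both to place $\Xaug^*$ inside the feasible set $(\StatSpAug^{*})^{\Ndata^*}$ of the inner maximization and to supply the compactness those lemmata require. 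Lastly, if in addition $C_{\Ndata}^*<1-\delta$ strictly for every $\Ndata<\Ndata^*$, then premature stopping is excluded for large $\Nsamps$ as well, the algorithm returns a data set of the optimal size $\Ndata^*$, and the estimate strengthens to the two-sided $|\C_{\Ndata^*}(\XDaug^{\Nsamps,*})-C^*|\leq\varepsilon$---though the stated theorem does not require this.
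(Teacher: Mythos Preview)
Your argument is correct and follows the same skeleton as the paper's proof---strict feasibility at $\Ndata^*$ forces termination at size $\leq\Ndata^*$, after which SAA consistency (\Cref{lemma:ConvergenceofCmtoCN,theorem:Shapiro}) delivers the bound---but your decomposition is sharper. The paper applies \Cref{theorem:Shapiro} at \emph{every} $\Ndata\in\{1,\dots,\Ndata^*\}$ and then multiplies the resulting probability-one statements using that the samples drawn at different iterates are i.i.d., aiming at the stronger conclusion $\lvert\XDaug^{\Nsamps}\rvert\to\Ndata^*$ and $\C_{\Ndata}^{\Nsamps,*}\to C^*$. You instead observe that for the \emph{one-sided} inequality actually stated, the early-termination case $\Ndata<\Ndata^*$ is disposed of trivially via minimality of $\Ndata^*$ (so $\C_{\Ndata}(\XDaug^{\Nsamps,*})\leq C_{\Ndata}^*\leq 1-\delta<C^*$), and only the termination argument at $\Ndata^*$ is genuinely needed. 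This buys you a cleaner proof with fewer invocations of \Cref{theorem:Shapiro}. Your two closing remarks are also on point: \Cref{ass:OptimuminCompactSet} is indeed used implicitly (the paper omits it from the hypotheses of \Cref{theorem:MainResult} but relies on it through \Cref{lemma:ConvergenceofCmtoCN,theorem:Shapiro}), and the paper's stronger claim that the returned size equals $\Ndata^*$ in the limit tacitly requires $C_{\Ndata}^*<1-\delta$ strictly for $\Ndata<\Ndata^*$, which is neither assumed nor argued there.
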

\begin{proof}
	The result holds if the approximate optima $\C_{\Ndata}^{\Nsamps}(\XDaug^{\Nsamps})$, $\Ndata = 1, \cdots,\Ndata^* $, obtained in Step \ref{algstep:IterativeOpt} of \Cref{alg:DataSelCtrl} converge uniformly to the true solutions $\C_{\Ndata}(\XDaug^{\Nsamps})$.
	
	Due to \Cref{lemma:ConvergenceofCmtoCN}, the conditions required by \Cref{theorem:Shapiro} hold for every fixed $\Ndata$. Furthermore, since the inequality $\C_{\Ndata^*}^* < 1-\delta$ holds strictly, \Cref{alg:DataSelCtrl} returns a solution of size at most $\Ndata^*$ with probability $1$ for $\Nsamps$ large enough. As the samples drawn for each problem are i.i.d., we have
	\begin{align*}
	\begin{split}
		&\text{P}\Big(\lim \limits_{\Nsamps \rightarrow \infty }\C_{\Ndata}^{\Nsamps,*} = \C^{*},  \lim_{\Nsamps \rightarrow \infty }\lvert \XDaug^{\Nsamps} \rvert = \Ndata^* ,  \forall \ \Ndata \in \mathbb{N}_{\leq \Ndata^*} \Big) \\= &\prod \limits_{\Ndata=1}^{\Ndata^*} \text{P}\left(\lim \limits_{\Nsamps \rightarrow \infty }\C_{\Ndata}^{\Nsamps}(\XDaug^{\Nsamps,*}) = \C^{*},  \lim_{\Nsamps \rightarrow \infty }\lvert \XDaug^{\Nsamps} \rvert = \Ndata^* \right)=1
		\end{split}
	\end{align*}
	In particular, \Cref{theorem:MainResult} implies that, for $\Nsamps$ large enough, the difference between the value of the approximate optimal value $\C_{\Ndata}^{\Nsamps,*}$ and the exact optimal solution $\C^{*}$ can be made arbitrarily small.
	
\end{proof}

\section{Numerical Illustration}
\label{sec:simulation}

We illustrate the proposed approach with a system of the form given by \eqref{eq:SystemDynamics}, where $\g(\xaug)=(u_1, \ u_2)\transp$,
\begin{align*}
	\f(\xaug)=\begin{pmatrix}
	x_1+(\cos(2\pi x_1)-1)x_2 \\
	\frac{1}{1+\exp(-5x_1)-\frac{1}{2}+\cos(\pi x_2)}
	\end{pmatrix},
\end{align*}
and $\w_{\step} \sim\mathcal{N}(\bm{0},\text{diag}(0.01,0.01))$.
Due to its highly nonlinear dynamics, it is impossible to extrapolate the system's behavior from locally collected data. Hence, unless the regions of interest for each control task overlap, each control task requires different measurements to achieve good performance.

We assume to know that $\f(\cdot)$ depends exclusively on $\x$, hence we employ a GP that takes only the state $\x$ as input. Moreover, we employ a squared-exponential kernel $\kernel(\cdot,\cdot)$ for the GP, which is able to approximate a continuous function arbitrarily accurately on compact sets \cite{wahba1990spline}. We employ GP-based feedback linearizing control laws $\bm{u}^j(\x,t)=-\postmean_{\Ndata,\step}(\bm{x})+\bm{x}_{\text{ref}}^j(t)$ with $3$ different reference trajectories 
\begin{align}
\x_{\text{ref}}^1(t)&=\bm{0}\\
\x_{\text{ref}}^2(t)&=\begin{bmatrix}
\sin(2\pi t/50) & \cos(2\pi t/50)
\end{bmatrix}\transp\\
\x_{\text{ref}}^3(t)&=\begin{bmatrix}
2\sin(2\pi t/25)& \cos(2\pi t/100)
\end{bmatrix}\transp.
\end{align}

The GP used to compute the mean $\postmean_{\Ndata,\step}(\cdot)$ is identical to the one used to obtain the approximate optimal data set $\XDaug^{\Nsamps,*}$ Each control law is required to fulfill a single tracking performance requirement $h_{\step}^j(\bm{x}) \leq 0$, $j=1,2,3$ where
\begin{align}
\label{eq:toyproblemreq12}
h_{\step}^j(\bm{x})&=\|\bm{x}-\bm{x}_{\text{ref},j}(t)\|_2-\varphi(\step), \ j=1,2,\\
\label{eq:toyproblemreq3}
h_{\step}^3(\bm{x})&=|x_1|-5/2,
\end{align}
and $\varphi(\step)\coloneqq \max\{ 3\exp(-\step/5),0.1 \}$,
over a time horizon of $\Nhor = 100$ steps. We assume that the optimal data set is contained within $\StatSpAug^* = [-3,3]^2$, since the control objectives are restricted to this region. Furthermore, we are given $100$ prior measurements taken from random samples of the true system, which we use to train the GP. The number of samples used to obtain the approximate optimal data set $\XDaug^{\Nsamps,*}$ is set to $\Nsamps =100$, and the desired probability of constraint satisfaction is set to $1-\delta=0.01$.

In order to solve the approximate optimization problem, we search for a solution by minimizing the surrogate function
\begin{align*}
\frac{1}{\Nsamps}\sum \limits_{m}^{\Nsamps}\prod\limits_{j=1}^{\Nlaws} \prod\limits_{t=1}^{\Nhor} \hsc^j_{\step}(\xaugsamp^{j}_{\step,m})\bm{1}_{\mathbb{R}_{+}^{\Ncons}}\left(\hsc^j_{\step}(\xaugsamp^{m}_{n_{j,\step}}) \right),
\end{align*}
which enables us to employ gradient-based methods. 

We apply the DS-ML algorithm $10$ times using randomly sampled starting points $\x_0 \in \mathcal{U}([-3,3]^2)$, where $\mathcal{U}(\cdot)$ denotes a uniform distribution, and obtain an approximate optimal data set $\XDaug^{\Nsamps,*}$ after $\Ndata \in \{6, \cdots, 9\}$ iterations of \Cref{alg:DataSelCtrl}. The approximate probability of constraint violation as a function of $\Ndata$ is shown in \Cref{fig:p_N}. The prior system measurements, the desired trajectories, and an approximate optimal set $\XDaug^{\Nsamps,*}$ obtained after applying the DS-ML algorithm can be seen in \Cref{fig:dat_dist}. 

\begin{figure}
	\pgfplotsset{ 
		height =50\columnwidth /100, grid= major, 
		legend cell align = left, ticklabel style = {font=\scriptsize},
		every axis label/.append style={font=\small},
		legend style={
			draw=none, 
			text depth=0pt,
			at={(-0.03,1.3)},
			anchor=north west,
			legend columns=-1,
			column sep=0.5cm,
			%
			/tikz/every odd column/.append style={column sep=0cm},
	}	 }
	\center
	\def\file{plots/dat_dist.txt}
	\tikzsetnextfilename{dat_dist}
	\begin{tikzpicture}[trim axis right,trim axis left]
	\begin{axis}[view={0}{90},grid=none,enlargelimits=false, axis on top,
	xlabel={$x_1$}, ylabel={$x_2$}, 
	xmin=-2.5, xmax = 2.5, ymin = -1.2, ymax = 1.2,	
	y label style={yshift=-0.2cm},
	]
	\addlegendimage{empty legend}
	\addplot[only marks,mark size=1pt] table[x=X_1, y=X_2]{\file};
	\addplot[only marks,red,mark size=1pt] table[x=Xdat_1, y=Xdat_2]{\file};
	\addplot[green,dashed] table[x=Xref3_1, y=Xref3_2]{\file};
	\addplot[green] table[x=Xref2_1, y=Xref2_2]{\file};
	\addplot[green, only marks, mark=+] table[x=Xref1_1, y=Xref1_2]{\file};
	\legend{Legend:, Prior data,$\XDaug^{M,*}$,$\x_{\text{ref},j}(\step)$}
	\end{axis}
	\end{tikzpicture}
	\caption{Prior measurement data, reference trajectories $\x_{\text{ref},j}(\step)$, and approximate optimal measurement locations $\XDaug^{M,*}$ obtained with a single application of DS-ML algorithm using $\Nsamps=50$.}
	\label{fig:dat_dist}
\end{figure}
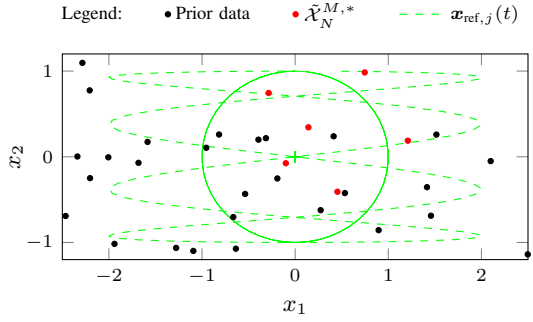

All approximate optimal sets $\XDaug^{\Nsamps,*}$ correspond roughly to points within the circle given by $\x_d^2(\step)$. This result is intuitive, since this is the region where the desired trajectories specified by \cref{eq:toyproblemreq12,eq:toyproblemreq3} overlap the most. Moreover, as can be seen in \Cref{fig:dat_dist}, the approximate optimal solution $\XDaug^{\Nsamps,*}$ regions that are both unexplored and of interest to the individual control tasks. However, since we employed a gradient-based solver, sub-optimal solutions are to be expected. This is also the case in \Cref{fig:dat_dist}, where some data points are close to already available prior data, i.e., a local minimum was found.

After every completion of the DS-ML algorithm, measurements of the true system at the approximate optimal set $\XDaug^{\Nsamps,*}$ are collected, and we carry out $100$ Monte Carlo simulations of the true system. This results in no constraint violation except for task $j=2$. However, constraint violations are small, as can be seen in \Cref{fig:const_real}, which indicates that the proposed method yielded a good approximate optimal data set $\XDaug^{\Nsamps,*}$.

\begin{figure}
\pgfplotsset{ compat = 1.13, 
	height =45\columnwidth /100, grid= major, }
	\center
	\def\file{plots/p_N.txt}
	\tikzsetnextfilename{p_N}
	\begin{tikzpicture}[trim axis right,trim axis left]
	\begin{axis}[grid=none,enlargelimits=false, axis on top,
	restrict x to domain*=0:25,
	restrict y to domain*=0:1,
	xlabel={Size of data set $\Ndata$}, ylabel={$\C_{\Ndata}^{\Nsamps}(\XDaug)$}, 
	xmin=0, xmax = 10, ymin = 0, ymax = 1.1,	
	y label style={yshift=-0.2cm},
	legend style={at={(1.02,0.5)},anchor=west},
	]
	\addplot[fill=black!20,on layer=axis background,draw=none] table[x = Nsul,y = p_stdul ]{\file};
	\addplot[black] table[x=Ns, y=p_m]{\file};
	\end{axis}
	\end{tikzpicture}
	\caption{Maximal approximate probability of constraint satisfaction $\C_{\Ndata}^{\Nsamps}(\XDaug^{\Nsamps})$ as a function of data set size $\Ndata$ for $50$ repetitions of DS-ML algorithm. Desired probability of constraint satisfaction $1-\delta$ is achieved after $\Ndata \in \left\{6,\cdots,9\right\}$ iterations of the DS-ML algorithm.}
	\label{fig:p_N}
\end{figure}
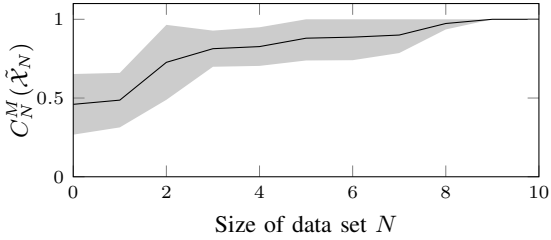

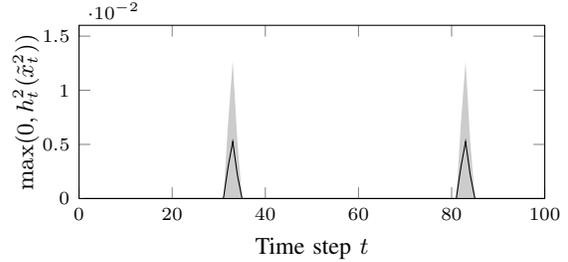
\begin{figure}
	
		\pgfplotsset{compat = 1.13, 
			height =45\columnwidth /100, grid= major,}
	\center
	\def\file{plots/const_real.txt}
	\tikzsetnextfilename{const_real}
	\begin{tikzpicture}[trim axis right,trim axis left]
	\begin{axis}[grid=none,enlargelimits=false, axis on top,
	xlabel={Time step $\step$}, ylabel={ $\max(0,\hsc_\step^2(\xaugsc_{\step}^2))$}, 
	xmin=0, xmax = 100, ymin = 0, ymax = 0.016,	
	y label style={yshift=-0.2cm},
	legend style={at={(1.02,0.5)},anchor=west},
	]
	\addplot[fill=black!20,on layer=axis background,draw=none] table[x = T_Horul,y = c_stdul ]{\file};
	\addplot[black] table[x=T_Hor, y=c_m]{\file};
	\end{axis}
	\end{tikzpicture}
	\caption{Constraint violations yielded by applying control law $\uinsc_{\step}^2(\cdot)$ to true system after data was collected at approximate optimal set $\XDaug^{\Nsamps,*}$ computed by DL-MS algorithm. }
	\label{fig:const_real}
\end{figure}

\addtolength{\textheight}{-3cm}   

\section{CONCLUSION AND FUTURE WORK}
\label{sec:conclusion}
This paper presents an algorithm to approximate the smallest training set required for successfully completing multiple tasks in learning-based 
control. We use a sample-based approximation that approximates the correct solution arbitrarily well with probability 1 as the number of 
samples increases. In a numerical simulation, the approximate optimal data sets computed with the proposed method are shown to yield adequate data sets for multiple tasks.

\bibliographystyle{IEEEtran}
\bibliography{../Literature/AllPhDReferences}

\end{document}